\newcommand{\gray}[1]{\textcolor{gray}{#1}}
\newcommand{\tabinfer}[3][]{
\AxiomC{#2}\RightLabel{#1}
\UnaryInfC{#3}\DisplayProof
}
\newcommand{\UICm}[1]{\UIC{$#1$}}
\newcommand{\AXCm}[1]{\AXC{$#1$}}
\newcommand{\UICg}[1]{\UIC{\gray{$#1$}}}
\newcommand{\RLg}[1]{\RL{\gray{#1}}}
\newcommand{\RLm}[1]{\RL{$#1$}}
\newcommand{\RLmg}[1]{\RLg{$#1$}}
\newcommand{\RAXm}[1]{\AXC{}\RL{ax}\UICm{#1}}
\newcommand{\tableau}[1]{\rootAtTop #1  \DP}
\newcommand{\sequent}[1]{\rootAtBottom #1 \DP}
\newcommand{\mysec}[1]{Section~#1}
\newcommand{\mydef}[1]{Definition~#1}
\newcommand{\myfig}[1]{Figure~#1}
\newcommand{\mythm}[1]{Theorem~#1}
\def\fa{\forall}
\def\ex{\exists}
\def\E{\cal E}
\def\brsep{\; |\;}
\def\T{\mathcal{T}}
\def\calL{\mathcal{B}}
\def\eunif{\approx_{\E}}
\def\tclose{\odot}
\def\vd{\vdash}
\def\point{\cdot}
\def\C{\mathcal{C}}
\def\calT{{\cal T}}
\def\gs3{{\sf GS3}\xspace} 
\def\branch{\mathpzc{b}}
\def\brule{\mathpzc{r}}
\def\limp{\Rightarrow}
\def\land{\wedge}
\def\lor{\vee}
\def\lfa{\forall}
\def\fa{\lfa}
\def\lex{\exists}
\def\ex{\lex}
\def\eps{{\varepsilon}}
\title{A syntactic soundness proof for free-variable tableaux with on-the-fly
  Skolemization }
\author{Richard Bonichon\inst{1} \and Olivier
  Hermant\inst{2}}
\institute{
  Universidade Federal do Rio Grande do Norte, Natal, RN, Brazil \\
  \email{richard@dimap.ufrn.br}
 \and MINES ParisTech, PSL Research University, France \\
 \email{olivier.hermant@mines-paristech.fr}}
\DeclareMathAlphabet{\mathpzc}{OT1}{pzc}{m}{it}
\begin{document}
\maketitle

\begin{abstract}
We prove the syntactic soundness of classical tableaux with
free variables and on-the-fly Skolemization. Soundness proofs are usually
built from semantic arguments, and this is to our knowledge, the first proof that
appeals to syntactic means. We actually prove the soundness property with
respect to cut-free sequent calculus. This requires great care because of the
additional liberty in freshness checking allowed by the use of Skolem
terms. In contrast to semantic soundness, we gain the possibility to
state a cut elimination theorem for sequent calculus, under the
proviso that completeness of the method holds. We believe that such
techniques can be applied to tableaux in other logics as well.
\end{abstract}

\section{Introduction}
\label{sec:introduction}

Tableaux methods form a successful sub-family of automated theorem proving,
encompassing classical as well as modal logics. Their origin comes from Beth's
semantic considerations \cite{Beth1955}. With Smullyan's updated tree-based
formalism \cite{RSmu68}, as well as Fitting's subsequent treatment
\cite{MFit96}, there is a first separation between syntactic and semantic
concerns.
Both present a purely syntactic operational behavior of
tableaux rules, justified by semantic soundness and completeness
proofs. Proving these two properties by semantic arguments has stayed the norm
and for good reasons: model-theoretic proofs are reasonably short, relatively
elegant and straightforward. In comparison, syntactic proofs can be messy, as
all translation details must be shown.

There might be another reason. Translating ground tableaux proofs à la
Smullyan to ground sequent calculus proofs is indeed trivial. If we
allow free variables and Skolemization, we still have a
straightforward translation to Antonsen and Waaler's free-variable
sequent calculus \cite{RAntAWaa03}. Thus, the relation between
classical tableaux and sequent calculi has been relegated to folklore
knowledge.

Nonetheless, translating free variable tableaux with Skolemization to
ground sequent calculus is not as simple a task: most of the trouble
comes from the freshness conditions imposed on existential witnesses
in sequents. Despite our efforts, we were not able to find any result
on that matter.

However, why would one want syntactic soundness over semantic
soundness ? At the proof-theoretical level, it provides a
double-check of soundness. In practice, it does not add any
power to tableaux heuristics. However, it presents some benefits,
especially in the context of proof production and proof theory.

Since it is not hard to encode ground sequent calculus rules into any
proof assistant such as Coq, Isabelle or Dedukti
\cite{MBoeQCarOHer12}, if we are able to 
reconstruct a ground sequent derivation from a free-variable
Skolemized tableaux procedure, we will get (almost) free external
verification tools. On the tableaux side, a syntactic soundness proof
highlights where and how non-elementary speedups are achieved from
the use of efficient $\delta$-rules. Lastly, our long-term goal is to
derive cut elimination theorems from tableaux completeness proofs, in
extensions of first-order logic, and this requires syntactic,
cut-free, soundness proofs.


\section{Free-Variable Tableaux}
\label{sec:free-vari-tabl}

The language is usual first-order logic with predicate and
function symbols. Sets and multisets of formulas are denoted by
capital greek letters ($\Gamma$, $\Delta$), while formulas 
are denoted by upper case letters $A, B, C, D$. 
We use the lower case letters $f, g$ to denote function symbols and
$a, b, c, d$ for constants. Variables are denoted as $x, y, z$. We
also use indexes or quotes when we need more symbols.

We present tableaux as a refutation calculus with attached
constraints via a global constraint store. This global store
represents the necessary unification steps to be performed and
satisfied in order to close the tableau. A \emph{constrained tableau}
is a pair $\T\point\C$ where $\T$ is a tableau and $\C$ a set of
unification constraints.


A branch can be \emph{closed} when it carries two opposite unifiable
formulas. \emph{Unifiable} here means that the global store does not
become inconsistent when adding the new unification constraints. A
tableau is itself said closed when \emph{all its branches} can be
closed at once. In this case, all closing constraints are unifiable.

This means that closing a first-order tableau can be seen as
providing a unifier that simultaneously satisfies all the global
constraints \emph{and} the closing constraints of the open 
branches, or, equivalently, that does not induce any new constraint on
the latter branches. The constraint store keeps the minimal
requirements for such a unifier: they come from the early closure of
some branches, discussed before. Of course, if this is done
carelessly, we can come to a dead-end.

We see constraints as a degree of liberty for tableaux. Ultimately, we
just can decide not to generate constraints at all, until a global
unifier can be found. The soundness proof of
\mysec{\ref{sec:soundness}} promotes this point of view: it assumes a
unifier and no constraints.

The rules, presented in \myfig{\ref{fig:tab-rules}} where the
constraints are omitted if they are unchanged, are an extension of
usual non-destructive free-variable tableaux
calculi. Non-destructivity is not strictly needed neither for
soundness nor for completeness, but it eases some developments.

Tableaux rules are usually divided into 4 sets: 2 sets decompose logical
connectives ($\alpha, \beta$), two act on quantifiers ($\delta, \gamma$). We
need only add the closure rule ($\odot$). If
$\land, \lor, \Rightarrow, \neg, \forall, \exists$ are allowed, we have the
following groups:

\[
\begin{array}{@{\vline\hspace{.2cm}}l@{\hspace{.2cm}\vline\hspace{.2cm}}l
    @{\hspace{.2cm}\vline}}
  \hline
  \alpha & A\land B, \neg (A\lor B), \neg (A\Rightarrow B), \neg \neg
  A \\ \hline
  \beta &  A\lor B, A\Rightarrow B, \neg (A\land B) \\ \hline
  \delta & \ex x\; A, \neg (\fa x\; A) \\ \hline
  \gamma & \fa x\; A, \neg(\ex x\; A) \\ \hline
\end{array}
\]

The decomposition of formulas happens as follows: the tableaux method
matches the active formula with one of the above categories, then applies the
corresponding rule to it. Negated formulas are actually handled in two steps:
the negation is pushed to the direct subformulas, transforming the
active connective by De Morgan laws, then the decomposition of the connective is
applied.

In pure automated deduction mode, it is enough to keep only the
current set of open branches, since the rules apply only on them. This
is no more the case if we are interested in exporting the proof in
other formats \cite{RBonDDelDDol07}. Moreover, keeping track of
previous steps can help us during proof search.

For proof-theoretic purposes, it is convenient to record all the steps
of the proof and to consider a tableau derivation as a {\em tree}
rooted at the original multiset of formulas; tableau
branches are {\em nodes}, internal if they already have been applied
some rule and external (\emph{leaves}) otherwise; the leaves that are
not closed, are \emph{open}, and they constitute the tableau properly
speaking. Tableau rules primarily operate on those leaves, extending
one of them at a time: rules are recorded as labels of inner
nodes. Trees themselves enjoy a notion of branch, that we
replace, to prevent confusion, with tableaux branches, by the word {\em
  path}.

Due to the non-destructive nature of the rules, the formulas on a path
are collected at the leaves. Paths, as well as leaves/branches, will
be identified as usual with trees, with sequences of $0$ and
$1$. $b.0$ is the left child of a path $b$, (or the unique child if
there is no branching), and $b.1$ is its right child.

\begin{figure}[htbp]
{\small
  \begin{subfigure}[t]{0.2\textwidth}
      \tabinfer{$\alpha(A,B)$} {$A,B$}{$\alpha$}
  \end{subfigure}
  \begin{subfigure}[t]{0.2\textwidth}
    \tabinfer{$\beta(A,B)$}{$A \brsep B$} {$\beta$}
  \end{subfigure}
  \begin{subfigure}[t]{0.3\textwidth}
    \tabinfer{$\gamma(x,A)$} {$A(x:=X)$}{$\gamma$} \hfill\\
    $X$ fresh free variable  \hfill
  \end{subfigure}
  \begin{subfigure}[t]{0.25\textwidth}
    \tabinfer{$\delta(x,A)$}{$A[$x:=\texttt{sko}$(\text{args})]$}{$\delta$}
  \end{subfigure}  \hfill

  \begin{subfigure}[t]{0.5\textwidth}
    \tabinfer{$A,\neg A\cdot \C$}
    {$\tclose \cdot (\C\cup \{A\eunif A\})$}
    {\texttt{closure} ($\tclose$)}
  \end{subfigure}\hfill
  \begin{subfigure}[t]{0.4\textwidth}
    Constraints ($\C$) are omitted in $\alpha$, $\beta$, $\gamma$, $\delta$.\hfill
  \end{subfigure}
  \hfill
}
  \caption{Tableau expansion and closure rules.}
  \label{fig:tab-rules}
\end{figure}

$\alpha$-rules and $\beta$-rules correspond to the standard ones as
found in Smullyan's textbook \cite{RSmu68}.
They all include negated formulas, as $\neg$ is a primitive
connective, and not an operator transforming formulas into negation
normal forms.

Free variables are used in $\gamma$-rules as placeholders waiting some
satisfying term instantiation, usually given by closure. This has a
direct effect on the treatment of existential quantifiers as we now
must use Skolemization to get a suitable sound witness.

The $\delta$-rule shown is generic and produces a fresh Skolem symbol
on-the-fly. This function symbol, here named \texttt{sko}, receives
the free variables in $A$ as arguments (args). The term is therefore
guaranteed to be fresh. We use a standard \emph{inner Skolemization}
\cite{ANonCWei01}: the arguments of the Skolem symbol are the free
variables actually occurring in the Skolemized formula $A$. Inner
Skolemization is more efficient than outer Skolemization in the sense
that it uses only relevant (i.e. fewer) elements as arguments. Such
on-the-fly Skolemization can also be replaced by a
pre-inner-Skolemization of formulas (this is the $\delta^{+^{+}}$ rule
of \cite{Beckert1993}), which would be even more efficient on some
problems. We chose not to do so because we intend to extend this work to
Deduction modulo \cite{DBLP:journals/jar/DowekHK03}, which does not behaves well with
pre-Skolemization, unless we switch to polarized Deduction modulo \cite{GDow10}.

Finally, we also have chosen inner Skolemization over other forms of
strong quantifier treatments \cite{Cantone2000,  Giese1999}\cite{DBLP:journals/jar/CantoneA07} because it
adds less noise (through technical difficulties) to the syntactic soundness
proof of \mysec{\ref{sec:soundness}}.

All in all, inner Skolemization is a good tradeoff between efficiency
and simplicity. It allows us to expose the techniques that allow us to
show syntactic soundness, with the right degree of difficulty.

Let us prove Smullyan's drinker problem, $\ex{x} (D(x) \limp \fa{y}
D(y))$, where $D$ is a unary predicate. As usual with tableaux, we actually
refute the negation $\neg (\ex{x} (D(x) \limp \fa{y} D(y)))$. The full
derivation is shown in \myfig{\ref{fig:min-delta-example}}.

\begin{figure}[htbp]
\begin{center}
\tableau{
\AXCm{\odot~~\{ X \approx c \}}
\RLm{\odot}
\UICm{\neg (\ex{x} (D(x) \limp \fa{y} D(y))), \neg (D(X) \limp \fa{y}
  D(y)), D(X), \neg \fa{y} D(y), \neg D(c)}
\RLm{\delta}
\UICm{\neg (\ex{x} (D(x) \limp \fa{y} D(y))), \neg (D(X) \limp \fa{y}
  D(y)), D(X), \neg \fa{y} D(y)}
\RLm{\alpha}
\UICm{\neg (\ex{x} (D(x) \limp \fa{y} D(y))), \neg (D(X) \limp \fa{y} D(y))}
\RLm{\gamma}
\UICm{\neg (\ex{x} (D(x) \limp \fa{y} D(y)))}
}
\end{center}
\caption{A proof of the drinker principle}
\label{fig:min-delta-example}
\end{figure}



\section{Sequent Calculus} \label{sec:gs}

This section presents the sequent calculus which will be used for the
syntactic soundness proof for tableaux. This version is as
close as possible to tableaux and equivalent to more usual sequent
calculi. The important difference with tableaux is that, as
most sequent calculi\footnote{one exception is Waaler and Antonsen's
  free-variable sequent calculus\cite{RAntAWaa03}}, we do not allow free
variables nor Skolemization, which will be the major concern of
\mysec{\ref{sec:soundness}}).

\gs3\footnote{We follow Troelstra and Schwichtenberg's classification and
  naming \cite{HSchASTro00}} (for Gentzen-Sch\"utte) is a one-sided variant of Gentzen's
original {\sf LK} sequent calculus. Contraction is implicit, built
into each inference rule, both to stick to tableaux rules, 
and as a convenience for the proofs we will develop. In contrast, the
weakening rule is explicit. The cut rule is absent, as we intend to go
without it in the soundness proof. To underline the similarities with
tableaux, we split the presentation of the rules along the $\alpha,
\beta, \gamma, \delta$ (\myfig{\ref{fig:gs3}}) classification for
tableaux, except that we explicitly mention every case, which is more
customary in sequent calculi.

\begin{figure}
  \begin{minipage}[t]{0.36\textwidth}
\small
    \begin{tabular}[t]{c}
      {\bf $\alpha$ group} \\[0.2cm]
    \AXCm{\Delta, \neg\neg A, A \vd}
    \RL{$\neg\neg$}
    \UICm{\Delta, \neg\neg A \vd}
    \DP \\\\
    \AXCm{\Delta, \neg\neg A, A \vd}
    \RL{$\neg\neg$}
    \UICm{\Delta, \neg\neg A \vd}
    \DP
    \\\\
    \AXCm{\Delta, \neg (A \Rightarrow B), A, \neg B \vd}
    \RL{$\neg\Rightarrow$}
    \UICm{\Delta, \neg (A \Rightarrow B) \vd} \DP
      \\\\
    \AXCm{\Delta, A \land B, A, B \vd} \RL{$\land$}
    \UICm{\Delta,A\land B \vd} \DP \\\\
    \AXCm{\Delta, \neg (A \lor B), \neg A, \neg B \vd}
    \RL{$\neg\lor$}
    \UICm{\Delta, \neg (A \lor B) \vd}\DP \\\\
      {\bf axiom rule} \\[0.2cm]
      \AXCm{~} \RL{ax}
        \UICm{\Delta,A, \neg A\vd} \DP
    \end{tabular}
  \end{minipage}
  \begin{minipage}[t]{0.3\textwidth}
  \begin{tabular}[t]{c}
    {\bf $\beta$ group}
     \\[0.2cm]
    \AXC{
    \begin{tabular}[t]{c}
      $\Delta, A\Rightarrow B,\neg A \vd$ \\ $\Delta, A\Rightarrow B, B \vd$
    \end{tabular}}
    \RL{$\Rightarrow$}
    \UICm{\Delta, A\Rightarrow B \vd} \DP
    \\\\

      \AXC{
    \begin{tabular}[t]{c}
      $\Delta, \neg (A \land B), \neg A, \vd$ \\
      $\Delta, \neg (A \land B), \neg B, \vd$
    \end{tabular}}
      \RL{$\neg\land$}
    \UICm{\Delta,\neg (A \land B) \vd} \DP
    \\\\
    \AXCm{
    \begin{array}[t]{c}
      \Delta, A\lor B, A\vd \\
      \Delta, A\lor B, B\vd
    \end{array}}

    \RL{$\lor$}
    \UICm{\Delta, A\lor B\vd}\DP \\\\

      {\bf structural group}\\[0.2cm]
       \AXCm{\Delta \vd}
      \RL{w}
      \UICm{\Delta, A \vd} \DP

  \end{tabular}
  \end{minipage}
  \begin{minipage}[t]{0.3\textwidth}
    \begin{tabular}[t]{c}
      {\bf $\delta$ group}\\[0.2cm]
      \AXCm{\Delta, \ex x\; A(x), A(c) \vd}
    \RL{$\ex$}
    \UICm{\Delta, \ex x\; A(x)\vd} \DP \\\\
      \AXCm{\Delta, \neg \fa x\; A(x), \neg A(c)\vd} \RL{$\neg\fa$}
    \UICm{\Delta, \neg \fa x\; A(x)\vd} \DP \\\\
      {\scriptsize where $c$ is a fresh constant} \\\\

      {\bf $\gamma$ group} \\[0.2cm]
 \AXCm{\Delta, \neg \ex x\; A(x), \neg A(t) \vd}
    \RL{$\neg\ex$}
    \UICm{\Delta, \neg \ex x\; A(x)\vd} \DP \\\\
  \AXCm{\Delta, \fa x\; A(x), A(t)\vd} \RL{$\fa$}
    \UICm{\Delta, \fa x\; A(x)\vd} \DP \\\\
{\scriptsize where $t$ is any term}
    \end{tabular}
  \end{minipage}

  \caption{\gs3}
  \label{fig:gs3}
\end{figure}


\section{Soundness Proof}
\label{sec:soundness}

This section shows the following property:
\begin{theorem}[Soundness of tableaux w.r.t. \gs3]
\label{prop:tamedlksound}
Let $\Gamma$ be a set of formulas. If there is a closed tableau rooted
at $\Gamma$, with unifier $\sigma$, then the sequent $\sigma \Gamma
\vd$ has a \gs3 proof.
\end{theorem}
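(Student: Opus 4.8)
The plan is to proceed by induction on the structure of the closed tableau $\T$, viewed as a tree whose root carries $\Ga$. I read the tableau root as the conclusion of the target \gs3 derivation and the closed leaves as its axioms, so that the translation amounts to replaying each tableau rule, read bottom-up, as the matching \gs3 rule applied to the $\sig$-instance of the path: for every node $b$ carrying the multiset $\Delta_b$ I build a \gs3 proof of $\sig\Delta_b \vd$. A closed leaf carries complementary literals, say $L$ and $\neg L'$, whose atoms the closure constraint forces to be $\eunif$-unifiable, hence complementary once $\sig$ is applied; thus $\sig\Delta_b$ contains $\sig L$ and $\neg\sig L$ and the \textsf{ax} rule closes it. The propositional steps are routine: an $\alpha$-step matches one of $\neg\neg, \neg\!\limp, \land, \neg\lor$ and a $\beta$-step one of $\limp, \neg\land, \lor$, and in each case applying $\sig$ commutes with the decomposition, so the \gs3 rule of the same name discharges the node from its child or children. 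The $\gamma$ case eliminates a free variable: a $\gamma$-step instantiates the bound variable by a fresh $X$, which under $\sig$ becomes the concrete term $\sig X$, a legal witness for the \gs3 rules $\fa$ and $\neg\ex$ since those allow \emph{any} term, so no side condition arises.

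The $\delta$ case is the crux, and the source of the difficulty announced in the abstract. A $\delta$-step replaces the bound variable by a term $\texttt{sko}(\vec{Y})$, where $\vec{Y}$ are the free variables occurring in the decomposed formula; applying $\sig$ turns this into $\texttt{sko}(\sig\vec{Y})$. To match the \gs3 rules $\ex$ and $\neg\fa$ I must turn each such term into a \emph{fresh constant}, and here the eigenvariable side condition bites. Although the Skolem symbol is fresh at its point of creation in the tableau, nothing prevents $\sig$ from instantiating a free variable $X$ introduced by an \emph{earlier} (lower) $\gamma$-step with a term that already mentions a Skolem symbol created by a \emph{later} (higher) $\delta$-step. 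Replacing that Skolem term by a constant and reading the proof bottom-up then makes the constant occur strictly below the very $\ex$ or $\neg\fa$ rule meant to introduce it freshly, breaking the side condition. So a naive, position-preserving translation of the strong-quantifier steps does not work.

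To repair this I would abandon the tableau order of the strong-quantifier steps. Instead I would define a relation on the Skolem terms occurring in $\sig\T$ by setting $s \prec t$ whenever the Skolem symbol heading $s$ occurs inside $t$ after $\sig$ has been applied, and observe that $\prec$ is a strict, well-founded order: a cycle would force a Skolem symbol to occur inside one of its own arguments through $\sig$, contradicting the occurs-check built into $\sig$ being an idempotent unifier. Linearizing $\prec$ yields a schedule for the $\ex$ and $\neg\fa$ rules in which each fresh constant $c_t$ is introduced strictly below every place it is used — both in the $\fa$ or $\neg\ex$ instantiations $\sig X$ that mention $\texttt{sko}_t$ and inside the constant-forms of larger Skolem terms — so that each eigenvariable condition is met by construction.

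The main obstacle is precisely this reconciliation. I must show that the well-founded Skolem order $\prec$ can be realised by an admissible rearrangement of the quantifier rules, i.e.\ that scheduling a strong-quantifier rule below its tableau position never conflicts with the logical constraint that a formula be present before it is decomposed, and that the bookkeeping of which constant replaces which Skolem term stays consistent across the branching $\beta$-nodes, where the same Skolem term may be eliminated on several branches. I expect this to require strengthening the induction hypothesis — or recasting the construction as a well-founded recursion on $\prec$ rather than on the tree — so that it records the global constant assignment and the partial schedule instead of treating each node in isolation. This is exactly where the \emph{additional liberty in freshness checking} afforded by on-the-fly inner Skolemization has to be paid for with care.
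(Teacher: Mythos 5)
Your diagnosis of the problem is exactly the paper's, and your $\alpha/\beta/\gamma$/closure cases match its Theorem~\ref{thm:parallel-extension}; but your repair for the $\delta$-case is precisely the ``na\"ive attempt'' the paper explicitly rules out (Section~\ref{sec:insight-into-transl}), and the gap you flag as ``the main obstacle'' is not a technicality to be absorbed by a stronger induction hypothesis --- it is fatal to any schedule-based plan. Freshness constraints are not the only ordering constraints: a strong-quantifier rule can only fire once its principal formula is \emph{present} in the sequent, and in general that formula only appears after a $\gamma$-rule whose $\sigma$-instantiated witness already mentions the very Skolem symbol at issue. The two constraint families can be jointly cyclic, and the drinker principle already exhibits the cycle: $\neg\fa{y}\,D(y)$ only arises after the $\neg\ex$ rule is applied with witness $c$, yet the $\neg\fa$ rule must introduce $c$ freshly, i.e.\ strictly below that $\neg\ex$ rule. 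So no linearization of your $\prec$ (well-founded though it is) extends to an admissible rearrangement of the quantifier rules; recording ``the global constant assignment and the partial schedule'' cannot help, because no schedule exists. There is also a size-theoretic obstruction the paper invokes: free-variable tableaux with on-the-fly Skolemization are non-elementarily shorter than cut-free sequent proofs \cite{RHahPSch94,BaazFerm95Tab,Cantone2000}, whereas a permutation-only translation is size-preserving up to bookkeeping; hence any correct translation \emph{must} duplicate subproofs, which your construction never does.

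The missing idea is duplication via implicit contraction. When the instantiated Skolem term $\delta$ is stale, the paper saves the sequent proof-tree built so far, weakens the offending leaves down to $\Gamma$ plus the $\delta$-formula (freshness is restored since $\Gamma$ contains no Skolem symbols), applies the $\delta$-rule there, and then \emph{regrafts and replays} the saved initial part above it, carrying the Skolem formula $D_\delta$ as an extra side formula so that the eventual axioms still close. Since the replay can itself hit $\delta$-rules whose Skolem terms are strict subterms of $\delta$, the construction is a nested induction on the pair (size of the Skolem term, size of the initial part) --- your order $\prec$ and its occurs-check justification do reappear here, but as the termination measure of this graft-and-grow construction (Theorem~\ref{thm:delta-thm-2}), not as a rule schedule. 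One structural consequence you would also have to absorb: after the first graft, a single tableau branch corresponds to \emph{many} sequent leaves, so the invariant cannot be ``one \textsf{GS3} branch per tableau node'' as in your structural induction; the paper maintains instead a many-to-one partial mapping (the links and bilinks of Definitions~\ref{def:partlink}--\ref{def:bilink}) between sequent leaves and tableau branches, extended in parallel at each step.
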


We require a closed tableau proof, that is to say an entire tree (see
\mysec{~\ref{sec:free-vari-tabl}}) where all branches are closed and
the constraints from the last generated constraint store
(the last rule is {\sf closure}) are satisfiable at once by some
unifier $\sigma$. It also satisfies any intermediate constraint from
this tableau proof as they all appear in the final store.

The unifier $\sigma$ can assign any term, including a free variable,
to a given free variable. To make it ground, we extend it to $\sigma'
= \kappa \circ \sigma$, where $\kappa$ maps the free variables from
the range of $\sigma$ to fresh constants. The unifier $\sigma'$ subsumes $\sigma$

Given a closed tableau proof $\T$ rooted at $\Gamma$, with ground
unifier $\sigma$, we call abusively the pair $(\T,\sigma)$
a closed tableau, which is ground and without constraint. We refer to
tableaux without unifier as \emph{strict/valid tableaux}.

\subsection{Origin of the Problem}

The naïve translation, that maps inductively each rule of $\calT$ to
the similar rule of \gs3{}, does not work. Let us translate this way
the tableau of \myfig{\ref{fig:min-delta-example}}.

The unifier is $\sigma = \{ X := c \}$, and the corresponding \gs3
pseudo-proof is the tableau proof simply turned upside down and
instantiated, as shown in \myfig{\ref{fig:unsound}} where bookkeeping
contractions have been eluded.

\begin{figure}[htbp]
\begin{center}
\sequent{
\RAXm{\neg (\ex{x} (D(x) \limp \fa{y} D(y))), \neg (D(c) \limp \fa{y}
  D(y)), D(c), \neg \fa{y} D(y), \neg D(c) \vd}
\RLm{\neg\fa}
\UICm{\neg (\ex{x} (D(x) \limp \fa{y} D(y))), \neg (D(c) \limp \fa{y}
  D(y)), D(c), \neg \fa{y} D(y) \vd}
\RLm{\neg\limp}
\UICm{\neg (\ex{x} (D(x) \limp \fa{y} D(y))), \neg (D(c) \limp \fa{y}
  D(y)) \vd}
\RLm{\neg\ex}
\UICm{\neg (\ex{x} (D(x) \limp \fa{y} D(y))) \vd}
}
\end{center}
\caption{Pseudo sequent derivation for the tableau of
  \myfig{\ref{fig:min-delta-example}}}
\label{fig:unsound}
\end{figure}

The problem in the derivation of \myfig{\ref{fig:unsound}} is that
the $\neg\fa$ rule (the counterpart of the $\delta$ rule) requires a
\emph{fresh} constant, and it cannot be $c$, as it was previously
introduced by the first $\neg\ex$ rule. In the tableau proof of
\myfig{\ref{fig:min-delta-example}}, freshness is innocently masked by
the unknown value of $X$.

The remedy, to show the drinker principle in \gs3, is well-known:
contract the goal formula, and use once to get a fresh constant $c$
with the $\neg\fa$ rule, and in a {\em
  second time} to generate the \emph{same} constant $c$ with the
$\neg\ex$ rule.

This is a one-shot particular solution, and we provide below a general jprocedure
to treat the problem: given any tableau proof, with a
\emph{relaxed} notion of freshness, we force the sequent rules to
apply in the \emph{right} order.

\subsection{Insight into the translation}
\label{sec:example}
\label{sec:insight-into-transl}

Lax freshness is sound for two reasons. First, free variable
tableaux are semantically sound. Second, we \emph{syntactically} know
it is sound through the unifier $\sigma$. The unifiability of the
constraints ensure that there is \emph{eventually} no loop. We are in
a way guaranteed that there is a right order for the instantiations.

Practice is more subtle. Indeed, any (still naïve) attempt
to order all quantifiers of the tableau by a combination of subterm
order and precedence in formula\footnote{quantifier $\mathcal{Q}_X$
  would have priority over $\mathcal{Q}_Y$ if it is higher in the same
  formula \emph{or} if the instance (by $\sigma$) of the
  metavariable/term introduced by $\mathcal{Q}_Y$ contains the Skolem
  term introduced by $\mathcal{Q}_X$.}, topologically sort them to
unravel the tableau and get the right order for rules, fails. There is
a theoretical argument: free-variable tableaux with on-the-fly
Skolemization can be non-elementarily shorter
\cite{RHahPSch94,BaazFerm95Tab,Cantone2000}
than sequent proofs, namely because of the relaxed notion of
freshness, post-checked at unification time. This appears clearly in
\myfig{\ref{fig:unsound}}: the two precedence constraints on the
$\neg\fa$ and $\neg\ex$ rules are conflicting.

The proofs of the drinker principle gives us a hint: duplication. This
removes the above theoretical barrier, as the sequent proof now grows
much bigger than the tableau proof. This also means we will make the
translated sequent grow \emph{from the root} to its axioms, ensuring
at every step soundness (the -- open -- sequent proof is \gs3{}-valid)
and progress (one tableau rule has been considered).

Let us translate the example to have a preview of what we will
do. For the sake of readability, and in analogy with the next
sections, we let $\Gamma$ be the root formula $\neg (\ex{x} (D(x)
\limp \fa{y} D(y)))$. Translating the first three rules is easy (see
\myfig{\ref{fig:initial-part}}). Next, we face the problem discussed
above and solve it in four steps:
\begin{enumerate}
\item \label{item:save} {\bf Save} the current incomplete proof-tree.
\item \label{item:weaken}
  {\bf Clean} the targeted open leaves: remove all formulas but
  $\Gamma$ and the $\delta$ formula of interest.
\item \label{item:apply} {\bf Apply} the now legal $\delta$ rule, and
  clean more (\myfig{\ref{fig:clean-and-apply}}).
\item \label{item:graft} {\bf Graft} the saved proof-tree
  \ref{item:save} to the targeted open leaves
  (\myfig{\ref{fig:graft-step}}). In fact, make the grafts of step
  \ref{item:apply} {\bf grow} following the saved proof-tree. {\bf
    Keep the Skolem formula} as an additional side formula on the
  relevant branches (in our example: on the single grafted branch).
\end{enumerate}
After those steps, we are able to translate further the tableau, in
our case, the sole axiom rule.

\def\figspace{0.4cm}
\begin{figure}[ht!]
  \centering
  \begin{subfigure}[b]{\textwidth}
    \centering
    \sequent{
      \AXCm{\Gamma, \neg (D(c) \limp \fa{y}
        D(y)), D(c), \neg \fa{y} D(y) \vd}
      \RLm{\neg\limp}
      \UICm{\Gamma, \neg (D(c) \limp \fa{y}
        D(y)) \vd}
      \RLm{\neg\ex}
      \UICm{\Gamma \vd}
    }
    \caption{First 3 steps of the translation of \myfig{\ref{fig:min-delta-example}}}
    \label{fig:initial-part}
    \vspace*{\figspace}
  \end{subfigure}

  \begin{subfigure}[b]{\textwidth}
    \centering
  \sequent{
    \AXCm{\Gamma, \neg D(c) \vd}
    \RL{w}
    \UICm{\Gamma, \neg \fa{y} D(y), \neg D(c) \vd}
    \RLm{\neg\fa}
    \UICm{\Gamma, \neg \fa{y} D(y) \vd}
    \RL{w}
    \UICg{\Gamma, \neg (D(c) \limp \fa{y}
      D(y)), D(c), \neg \fa{y} D(y) \vd}
    \RLmg{\neg\limp}
    \UICg{\Gamma, \neg (D(c) \limp \fa{y} D(y)) \vd}
    \RLmg{\neg\ex}
    \UICg{\Gamma \vd}
  }
  \caption{Cleaning and applying the $\delta$-rule}\label{fig:clean-and-apply}
  \label{fig:delta-step}

    \vspace*{\figspace}
  \end{subfigure}
  \begin{subfigure}[b]{\textwidth}
    \centering
  \sequent{
    \AXCm{\Gamma, \mathbf{\neg D(c)}, \neg (D(c) \limp \fa{y} D(y)), D(c), \neg
      \fa{y} D(y) \vd}
    \RLm{\neg\limp}
    \UICm{\Gamma, \mathbf{\neg D(c)}, \neg (D(c) \limp \fa{y} D(y)) \vd}
    \RLm{\neg\ex}
    \UICg{\Gamma, \neg D(c) \vd}
    \RLg{w}
    \UICg{\Gamma, \neg \fa{y} D(y), D(c) \vd}
    \RLmg{\neg\fa}
    \UICg{\Gamma, \neg \fa{y} D(y) \vd}
    \RLg{w}
    \UICg{\Gamma, \neg (D(c) \limp \fa{y}
      D(y)), D(c), \neg \fa{y} D(y) \vd}
    \RLmg{\neg\limp}
    \UICg{\Gamma, \neg (D(c) \limp \fa{y}
      D(y)) \vd}
    \RLmg{\neg\ex}
    \UICg{\Gamma \vd}
  }
  \caption{Graft and grow}
  \label{fig:graft-step}

  \end{subfigure}
  \caption{Solving the drinker problem}
\end{figure}

Grafting a proof-tree with more than one open leaf multiplies the
number of leaves of the tree. Translating a \emph{single} tableau
rule into \emph{several} sequent rules is unavoidable, and both height
and width grow. So, in general, a single tableau branch (resp. rule)
corresponds to several sequent branches (resp. rules). The general
mechanism is discussed in the next sections.

\subsection{Initial Definitions and Lemmas}

We have already mentioned that the \gs3 proof is not built by
structural induction. We thus need some additional definitions.

\begin{definition}[Initial part]
Let $T$ be a closed strict tableau rooted at $\Gamma$. An open tableau
$T_{0}$ is said to be an initial part of $T$ iff it is
rooted at $\Gamma$ and:
\begin{itemize}
\item either $T_{0}$ is a leaf:
  \begin{itemize}
    \item if the root of $T$ is also a leaf (closed by hypothesis), $T_{0}$ is a
      \emph{closed} leaf;
    \item if the root of $T$ is an internal node, $T_{0}$ is an
      \emph{open} leaf.
  \end{itemize}
\item or the rule applied at the root of $T_{0}$ is exactly the same as
  the rule applied at the root of $T$ and the sub-tableau(x) of $T_{0}$
  are initial parts of the corresponding sub-tableau(x) of $T$.
\end{itemize}
We use the same terminology for \gs3 proof-trees.
\end{definition}

Alternatively, if we consider a \emph{sequence of tableaux} used to
derive tableau $T$ from its root $\Gamma$, then $T_{0}$ is an initial
part of it if, and only if, there exists at least one such sequence
where $T_{0}$ appears.

An initial part $T_{0}$ of $T$ shares the same root, nodes, sequents,
branches, constraints, paths and rules as $T$ up to the leaves of
$T_{0}$. $T_{0}$ can also be thought of a labeling of the nodes of $T$ as
``seen'' and ``unseen''. For instance, the tableau of
\myfig{\ref{fig:initial-part}} is an initial part of the tableau
of \myfig{\ref{fig:min-delta-example}}.

The following lemma shows that subsequent definitions are well-formed:
\begin{lemma}
Let $T_{0}$ be an initial part of a closed strict tableau $T$, $\branch$ an
open leaf of $T_{0}$, and $\brule$ the rule applied to the corresponding
branch $\branch$ on $T$. The extension of $T_{0}$ by the application of $\brule$ on
$\branch$ is also an initial part of $T$.
\end{lemma}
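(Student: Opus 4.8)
The plan is to prove this by following the recursive structure of the \emph{Initial part} definition, showing that extending $T_0$ at an open leaf $\branch$ by the rule $\brule$ keeps every clause of that definition satisfied. First I would recall what must be checked: the extended tableau $T_0'$ must still be rooted at $\Gamma$ (immediate, since we only grow at a leaf and do not touch the root), and at every internal node the rule applied in $T_0'$ must coincide with the rule applied at the corresponding node of $T$, with the sub-tableaux being initial parts of the corresponding sub-tableaux of $T$.

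The core argument is local and proceeds by cases on the status of $\branch$. Since $\branch$ is an \emph{open} leaf of $T_0$, the corresponding node in $T$ is an internal node (an open leaf of $T_0$ cannot correspond to a closed leaf of $T$; this is exactly what the first clause of the definition guarantees). Hence $T$ genuinely applies some rule $\brule$ at that node, so $\brule$ is well-defined, and the extension is meaningful. I would then argue that for all internal nodes \emph{other} than $\branch$ itself, the initial-part conditions are inherited unchanged from $T_0$, because $T_0'$ agrees with $T_0$ everywhere except below $\branch$. At the newly-expanded node $\branch$, the rule applied in $T_0'$ is $\brule$ \emph{by construction}, which is by definition the rule applied at the corresponding node of $T$; it remains to check that the freshly created children of $\branch$ in $T_0'$ are initial parts of the corresponding children of $T$. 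But each such child is a single leaf, and the recursive clause for leaves is satisfied: a child is a closed leaf of $T_0'$ exactly when the corresponding node of $T$ is a (closed) leaf, and an open leaf otherwise. This is precisely the dichotomy in the first bullet of the definition.

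I expect the main obstacle to be purely bookkeeping: making the correspondence between nodes of $T_0'$ and nodes of $T$ precise, and verifying that the case analysis on $\branch$ being a one-child ($\alpha,\gamma,\delta$) or two-children ($\beta$) expansion does not break anything. The $\delta$ case deserves a remark, since the Skolem symbol introduced when applying $\brule$ on $\branch$ in $T_0'$ must be the \emph{same} symbol as in $T$ --- this is ensured because $T_0'$ shares nodes, sequents and rules with $T$ up to the leaves of $T_0$, so the arguments of the Skolem function (the free variables occurring in the active formula) are identical on both sides. Once these identifications are spelled out, the two clauses of the definition follow directly, and no nontrivial computation is required; the proof is essentially an unfolding of the recursive definition combined with the observation that an open leaf of $T_0$ faces an internal node of $T$.
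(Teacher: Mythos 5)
Your proof is correct, and it agrees with the paper's treatment: the paper states this lemma without any proof, regarding it as an immediate unfolding of the definition of initial part, which is exactly what you carry out. Your key observations --- that an open leaf of $T_0$ necessarily faces an \emph{internal} node of $T$ (so $\brule$ is well-defined), that the freshly created children are leaves whose open/closed status matches the dichotomy in the first clause of the definition (closed exactly when the corresponding node of $T$ is a closed leaf, which happens precisely when $\brule$ is the closure rule), and that in the $\delta$-case the same Skolem symbol is introduced because $T_0$ shares nodes, formulas and rules with $T$ up to its leaves --- are precisely the points the paper's implicit argument relies on.
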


Our goal is to incrementally build a \gs3{} proof-tree by following the
rules of $T$,  given a closed (strict) tableau $T$ with a ground unifier
$\sigma$. In a sense, we replay the
steps that were used to build $T$, get an initial part $T_{0}$, and
maintain the invariant that the \gs3{} proof-tree \emph{maps to}
$T_{0}$. Note again that a single open-branch of $T_{0}$ serves to extend
{\em several} branches of the \gs3{} proof-tree at the same time. We
first define the mapping:

\begin{definition}[Partial Link]\label{def:partlink}
Let $\pi_{0}$ be an open \gs3{} proof-tree rooted at $\Gamma$ and let also $s_1,
\cdots, s_n$ be its open leaves, containing respectively the sequents
$\Gamma_{s_1} \vdash, \cdots, \Gamma_{s_n} \vdash$.

Let $T_{0}$ be an open strict tableau with open leaves $\branch_1, \cdots,
\branch_m$, that respectively containing the set of formulas $\Delta_{\branch_1},
\cdots, \Delta_{\branch_m}$. Let $\sigma$ be a unifier for $T_{0}$.

$\pi_{0}$ is partially linked to $(T_{0},\sigma)$ if, and only if, there
exists a partial mapping $\mu: \{s_1,\cdots,s_n\} \mapsto
\{\branch_1,\cdots,\branch_m\}$, such that $\sigma \Delta_{\mu(s)} \subseteq
\Gamma_{s} $, when $\mu(s)$ is defined.

We say that the leaf $s$ (of $\pi_{0}$) is linked to the leaf $\mu(s)$
(of $T_{0}$), and that the formulas of $\Gamma_{s} \backslash
\sigma\Delta_{\mu(s)}$ are the {\em side formulas} of $s$.
\end{definition}

This notion is readily extended to describe a partial link to a \gs3{}
proof-tree. In this case, there is no need for an unifier.

Notice that, when $\mu(s_i) = \mu(s_j)$, nothing prevents the side
formulas of $s_i$ and $s_j$ to be different. $\Gamma_s$ is only
required to contain the instances by $\sigma$ of the formulas of
$\Delta_{\mu(s)}$.

Notice also that $\mu$ is not required to be injective or
surjective. Non-injectivity accounts for the fact that a single
tableau branch is reflected at more than one place on a \gs3{}
proof-tree. Non-surjectivity of the mapping amounts for the fact
that some branches of the original proof may not be reflected in
$\pi$, in particular when $\pi$ is bilinked (\mydef{\ref{def:bilink}}
below). One can check that, in
the proof of \mythm{\ref{thm:delta-thm-2}}, the link to $\theta$ is not
surjective, but the link to $\pi$ is maintained surjective.

We need the two following refinements over partial links:

\begin{definition}[Link]\label{def:link}
Let $\Gamma$ be a set of formula. Let $\pi_{0}$ be a proof-tree linked to
a tableau $(T_{0},\sigma)$, and assume that:
\begin{itemize}
  \item $\pi_{0}$ and $T_{0}$ are both rooted at $\Gamma$,
  \item and the mapping $\mu$ is total.
\end{itemize}
Then $\pi_{0}$ is said to be \emph{linked} to $(T_{0},\sigma)$.
\end{definition}

\begin{definition}[Bilink]\label{def:bilink}
We say that $\pi$, with open leaves $\{ s_1, \cdots, s_n \}$ is
bilinked to two \gs3{} proof-trees $\theta_{0}$ and $\theta_{1}$ if, and only
if, it is partially linked to $\theta_{0}$ and to $\theta_{1}$, and the
respective mappings $\mu_{0}$ and $\mu_{1}$ verify the disjointness and
covering conditions:
\begin{itemize}
\item $\mbox{\sf{Dom}}(\mu_{0}) \cap \mbox{\sf{Dom}}(\mu_{1}) = \emptyset$
\item $\mbox{\sf{Dom}}(\mu_{0}) \cup \mbox{\sf{Dom}}(\mu_{1}) = \{ s_1, \cdots, s_n \}$
\end{itemize}
\end{definition}

Given a link $\mu$ between a \gs3{} open proof-tree $\pi$ and an
initial part of $T$, the intention is to apply to all the open leaves
$s \in \mu^{-1}(\branch_j)$, the same rule as on $\branch_j$. This is formalized
in the next definition:

\begin{definition}[Parallel extension]
Let $\pi_{0}$ be a \gs3{} proof-tree, linked to $(T_{0},\sigma)$ with
mapping $\mu_{0}$, where $T_{0}$ is an initial part of a closed strict
tableau $T$ with unifier $\sigma$. Let $T_{1}$ be the extension of $T_{0}$
along $T$ on some open leaf $\branch$ with rule $\brule$.

The open proof-tree $\pi_{1}$ of \gs3{} is called a parallel extension
of $\pi_{0}$ along $T_{1}$ (by $\brule$) if it can be linked to $(T_{1},\sigma)$
such that the mapping $\mu_{1}$ is equal to $\mu_{0}$, except on the newly
created leaves of $\pi_{1}$, in which case the new leaves are mapped to
the corresponding premise leaf(s) of $\brule$ in $T_{1}$.

By abuse of language, this process is called the parallel extension of
$\pi$ along $T$. The equivalent notion can be defined for two
(partially) linked \gs3{} proofs-terms and we will use the same
terminology.
\end{definition}

In practice, $\pi_{1}$ is built out of $\pi_{0}$ by adding the inference
rule $\brule$ on the suitable leaves. Since this consumes exactly one rule
of $T$, the process of parallel extension eventually stops and generates a \gs3{}
proof-tree. This proof-tree is a sequent proof: all its leaves are closed because they
are totally linked to leave themselves closed. The main question is
whether this is always possible. The example in
\mysec{\ref{sec:example}} shows that it is not so simple.

\subsection{Parallel Extensions}
\label{sec:parallel-extensions}
Now we are equipped to describe our algorithm and prove the
following theorem:

\begin{theorem} \label{thm:parallel-extension}
Given any closed tableau $T$ with unifier $\sigma$, any initial part
$T_{0}$, and any \gs3{} proof-tree $\pi_{0}$ linked to $(T_{0},\sigma)$, it is
possible to parallely extend $\pi_{0}$ along $T$.
\end{theorem}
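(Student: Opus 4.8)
The plan is to prove Theorem~\ref{thm:parallel-extension} by induction on the rule $\brule$ that extends $T_0$ to $T_1$ on the open leaf $\branch$, distinguishing the tableau rule groups. The invariant I maintain is the \emph{link} (\mydef{\ref{def:link}}): a total mapping $\mu_0$ from the open leaves of $\pi_0$ to those of $T_0$ with $\sigma\Delta_{\mu_0(s)}\subseteq\Gamma_s$. For each rule $\brule$ applied at $\branch$, I must produce a parallel extension $\pi_1$ whose new leaves map to the premise leaf(s) of $\brule$ in $T_1$, while leaving $\mu_0$ untouched elsewhere. The key observation is that I act \emph{simultaneously} on all leaves $s\in\mu_0^{-1}(\branch)$ of $\pi_0$, applying to each the \gs3 rule that corresponds to $\brule$ under the De~Morgan/connective dictionary of \myfig{\ref{fig:gs3}}.

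First I would dispatch the easy cases. For $\alpha$ and $\beta$ rules, the corresponding \gs3 inference (e.g.\ $\land$, $\neg\lor$, $\neg\Rightarrow$ for $\alpha$; $\lor$, $\neg\land$, $\Rightarrow$ for $\beta$) has exactly the matching premise structure, and since the principal formula $\sigma A$ already lies in $\Gamma_s$ (it is the $\sigma$-instance of the formula in $\Delta_{\branch}$), the rule applies directly at each $s\in\mu_0^{-1}(\branch)$; the subformulas it introduces are precisely the $\sigma$-instances of the new tableau formulas, so $\sigma\Delta_{\mu_1(s')}\subseteq\Gamma_{s'}$ is preserved and the side formulas simply carry over. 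The $\gamma$ case is equally direct: the \gs3 rules $\fa$ and $\neg\ex$ allow \emph{any} term $t$, so I instantiate with $\sigma X$, the ground term the unifier assigns to the fresh free variable $X$ produced by the $\gamma$ rule, and again containment is immediate.

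The hard part will be the $\delta$ case, exactly as foreshadowed by the drinker example of \mysec{\ref{sec:example}}: the \gs3 rules $\ex$ and $\neg\fa$ demand a \emph{fresh} constant, whereas the Skolem term $\mathtt{sko}(\vect{\mathrm{args}})$ introduced on the tableau need not be fresh in the ambient sequent $\Gamma_s$ (its free-variable arguments may already have been instantiated to constants occurring in $\Gamma_s$). Here the naïve one-step translation fails, and I would invoke the four-step Save/Clean/Apply/Graft construction. Concretely, for the targeted leaves I would \textbf{save} the current $\pi_0$, \textbf{weaken} each targeted leaf down to a sequent retaining only $\Gamma$ together with the relevant $\delta$-formula (so that no offending constant blocks freshness), \textbf{apply} the now-legal $\ex$ or $\neg\fa$ rule introducing a genuinely fresh constant $c$, and finally \textbf{graft} the saved proof-tree back on, growing it while carrying the introduced Skolem/existential formula $\sigma A(c)$ as an extra side formula on the relevant branches. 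The soundness of relaxed freshness rests on the two reasons given in \mysec{\ref{sec:insight-into-transl}}: the unifier $\sigma$ witnesses that the instantiation order can eventually be untangled with no loop, and it is this acyclicity that guarantees the grafting terminates and yields a legitimate \gs3 derivation.

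The principal technical burden, therefore, is to show that the grafted tree still satisfies the link invariant and that the freshly chosen $c$ does not clash anywhere it matters. I expect this to require a careful accounting of side formulas: because $\mu_0$ need be neither injective nor surjective, a single tableau $\delta$-step is reflected on several \gs3 leaves at once, and the bilinked intermediate structure (\mydef{\ref{def:bilink}}) is the right bookkeeping device to keep the saved copy and the growing copy disjoint yet jointly covering. Verifying that $\mu_1$ remains total, that it agrees with $\mu_0$ off the new leaves, and that the new leaves correctly target the premise leaf of $\brule$ in $T_1$ is where the proof will be most delicate; this is presumably the content of the auxiliary $\delta$-theorem (\mythm{\ref{thm:delta-thm-2}}) referenced after \mydef{\ref{def:partlink}}, which I would prove first and then cite here to close the $\delta$ case.
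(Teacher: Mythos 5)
Your proposal follows the paper's proof almost step for step: a case analysis on $\brule$, acting simultaneously on every leaf in $\mu_{0}^{-1}(\branch)$, direct translation of the $\alpha$-, $\beta$- and $\gamma$-rules (with $\sigma X$ as the witness in the $\gamma$ case, which the paper leaves implicit), and deferral of the $\delta$ case to the auxiliary $\delta$-theorem (\mythm{\ref{thm:delta-thm-2}}) with its bilink bookkeeping --- this is exactly the paper's decomposition, including your correct identification of the grafting/termination question as the content of that lemma. But your case analysis has a genuine omission: the \emph{closure} rule, which is one of the possible values of $\brule$ (see \myfig{\ref{fig:tab-rules}}) and the only way leaves ever become closed. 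There the paper applies the \gs3{} axiom rule on each $s \in \mu_{0}^{-1}(\branch)$ --- legitimate precisely because $\sigma$ satisfies the closing constraint, so the $\sigma$-instances of the two opposite formulas are syntactically dual --- and then restricts the mapping to $\mu_{1} = \mu_{0_{|I}}$ with $I = \{ s_j \mid \mu_{0}(s_j) \neq \branch \}$. This is the one case where the unifier's defining property is actually invoked, so it cannot be left tacit; without it the iterated extension never produces axioms and no sequent proof results.

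A second, subtler point concerns your $\delta$ sketch: you speak of applying the now-legal rule ``introducing a genuinely fresh constant $c$''. The paper does not introduce fresh constants during the construction. It uses the $\sigma$-instance of the Skolem term itself as the witness (freshness holds at the cleaned leaf because $\Gamma$ contains no Skolem symbol), and only once the whole tree is built --- the final step in the proof of \mythm{\ref{prop:tamedlksound}} --- are Skolem terms, by then genuinely fresh, replaced by distinct constants. This matters: the several sequent leaves in $\mu_{0}^{-1}(\branch)$ reflecting one tableau $\delta$-step must all introduce the \emph{same} term, namely $\sigma$ applied to $\mathtt{sko}(\mathrm{args})$, or else the link invariant $\sigma \Delta_{\mu(s)} \subseteq \Gamma_{s}$ is lost (the extended tableau branch contains the Skolemized formula) and later axiom steps, whose closing constraints mention that Skolem term, no longer apply. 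If your $c$ is shorthand for that instance, your plan coincides with the paper's; if it denotes an independently chosen fresh constant per leaf, the step fails. Finally, note that the hard kernel you defer is resolved in the paper by an induction on the pair (size of the Skolem term, size of $\pi_{0}$), with a nested appeal to the induction hypothesis when a strictly smaller Skolem term blocks freshness --- your appeal to ``acyclicity via $\sigma$'' gestures at this but the well-founded measure is what makes the grafting terminate.
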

\begin{proof}
Let $\branch$ be an open leaf of $T_{0}$, and $\brule$ the rule applied to it
in $T$. Let $T_{1}$ be the extension of $T_{0}$ along $T$ on $\branch$ with rule
$\brule$. Consider the different cases for $\brule$:
\begin{itemize}
\item $\brule$ is an $\alpha$-rule on a formula $A$: on each $s_i \in
  \mu_{0}^{-1}(\branch)$, $\sigma A$ is present on $s_i$ by definition of
  linkedness, we apply $\brule$ on it. We link this new proof-tree
  exactly as the old one, and let $\mu_{1}$ be defined as:
  $$
  \left\{
  \begin{array}{c@{~=~}ll}
    \mu_{1}(s_j)   & \mu_{9}(s_j)       & \;\mbox{for any} s_j \notin \mu_{0}^{-1}(\branch) \\
    \mu_{1}(s_n.0) & \mu_{0}(\branch).0 & \;\mbox{for any} s_i \in \mu_{0}^{-1}(\branch) \\
  \end{array}
  \right.
  $$
Since both the tableau and the \gs3{} rules are non-destructive,
the invariant $\sigma \Delta_{\mu(s)} \subseteq \Gamma_{s}$ is
maintained.
\item $\brule$ is a $\gamma$-rule: we do exactly the same.
\item $\brule$ is a $\beta$-rule. We act similarly, except that we
  have two new open leaves in $T_{1}$, $\branch.0$ and $\branch.1$. As well,
  all the $s_i$ open leaves of $\pi_{0}$ split into $s_i.0$ and $s_i.1$.
  The new linking function $\mu_{1}$ is straightforward:
  $$
  \left\{
  \begin{array}{c@{~=~}ll}
    \mu_{1}(s_j)   & \mu_{0}(s_j)       & \;\mbox{for any} s_j \notin \mu_{0}^{-1}(\branch) \\
    \mu_{1}(s_n.0) & \mu_{0}(\branch).0 & \;\mbox{for any} s_i \in \mu_{0}^{-1}(\branch) \\
    \mu_{1}(s_n.1) & \mu_{0}(\branch).1 & \;\mbox{for any} s_i \in \mu_{0}^{-1}(\branch) \\
  \end{array}
  \right.
  $$
\item $\brule$ is a $\delta$-rule: this is entailed by
  \mythm{\ref{thm:delta-thm-2}} below. We postpone this case to the
  end of \mysec{\ref{sec:parallel-delta}}.
\item $\brule$ is a closure rule: we apply the axiom rule on each $s_i \in
  \mu_{0}^{-1}(\branch)$. $\branch$ is now a closed leaf of $T_{1}$, and
  accordingly
  the $s_i$ are no more open. We thus need restrict
  the domain of $\mu_{0}$: $\mu_{1} = \mu_{0_{|I}}$, where $I = \{ s_j~|~
  \mu_{0}(s_j) \neq \branch \}$.\qed
\end{itemize}
\end{proof}

Notice that the choice of the leaf $\branch$ is not imposed. In order to
optimize the translation, it is possible to define some heuristics to
choose the branch. As well, for better performances, the heuristics
may rearrange, on each path, the order of the rules but the
theoretical barrier discussed above will still pop up at some
point. This is why we do not insist on optimization here.

\subsection{Parallel $\delta$-extensions}
\label{sec:parallel-delta}

The possibility of a $\delta$-extension is made possible by the
following theorem:

\begin{theorem}[$\delta$-theorem]\label{thm:delta-thm-2}
Let $(T, \sigma)$ be a closed tableau. Let $\Gamma$ its root formulas,
$\ex{x} D(x)$ be a formula of it, on which a $\delta$-rule is applied,
generating the Skolem term $\delta$ and the formula let $D_\delta =
D(\delta)$. We consider the instances by $\sigma$ of those term and
formulas, and call them identically.

Let $\theta$ be an (open) \gs3{} proof-tree composed only with
formulas that appear in $(T, \sigma)$ (as instances by $\sigma$ of
formula of $T$), rooted at $\Gamma$ and such that  each leaf contains
at least $\Gamma$.

Assume that a set of leaves, denoted $\calL$, contains
$\ex{x} D(x)$. Let $\pi_{0}$ be an initial part of $\theta$.


Then it is possible to build a proof-tree $\pi_{1}$, rooted at
$\Gamma$, that is bilinked to $\pi_{0}$ and $\theta$ with mappings
$\mu_{\pi_{0}}$ and $\mu_\theta$ respectively, such that:
\begin{itemize}
\item There is no $s_1$ such that $\mu_\theta(s_1) \in \calL$, i.e.
  the leaves of $\theta$ in $\calL$ are ``unreachable''.
\item for any leaf $s_1$, such that $\mu_{\pi_{0}}(s_1)$ is a prefix of a
  path $\branch \in \calL$ (for short: $\mu_{\pi_{0}}(s_1)$ is a prefix of
  $\calL$), $D_\delta$ appears on this node as a side formula.
\item All other leaves $s_1$ of $\pi_{1}$ have the same formulas than
$\mu_{\pi_{0}}(s_1)$, or than $\mu_\theta(s_1)$.
\end{itemize}
\end{theorem}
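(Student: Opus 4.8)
The plan is to construct $\pi_1$ explicitly by the save--clean--apply--graft schema rehearsed on the drinker example, taking the bilink as the specification to meet. The open leaves of $\pi_0$ fall into two kinds. A leaf $s$ whose image $\mu_{\pi_0}(s)$ is \emph{not} a prefix of any path of $\calL$ is grown along $\theta$ down to the leaves of $\theta$ below it, and these are linked through $\mu_\theta$; none of them lies in $\calL$ (no path through $s$ reaches $\calL$), so conditions 1 and 3 are met there for free. All the work concentrates on the leaves $s$ with $\mu_{\pi_0}(s)$ a prefix of some path of $\calL$, i.e. the branches that must eventually carry the Skolem formula $D_\delta$; these I would process independently.

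For such an $s$ I would first reproduce the rules of $\theta$ below it until the leaves of $\theta$ beneath $s$ are reached; the non-$\calL$ ones are linked through $\mu_\theta$ and are finished. At each reached leaf of $\calL$ --- which now exhibits $\ex x\, D(x)$ --- I would \textbf{clean} by repeated weakening down to $\Gamma, \ex x\, D(x) \vdash$ (legitimate since weakening is an explicit \gs3{} rule and linkedness only asks $\sigma\Delta_{\mu(s)} \subseteq \Gamma_s$), \textbf{apply} the $\ex$ rule to get $\Gamma, \ex x\, D(x), D_\delta \vdash$, \textbf{clean more} by weakening $\ex x\, D(x)$ away to reach $\Gamma, D_\delta \vdash$, and finally \textbf{graft} a fresh copy of $\theta$ rooted there, propagating $D_\delta$ as a side formula exactly along the branches heading to a leaf of $\calL$ and weakening it out on the others. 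The leaves of this graft are then dispatched: each one lying over a path of $\calL$ is put in $\mathrm{Dom}(\mu_{\pi_0})$ and linked back to $s$, where it carries $D_\delta$ (condition 2), while every other graft leaf is put in $\mathrm{Dom}(\mu_\theta)$ and linked to the matching leaf of $\theta$ (condition 3). Since the $\calL$-leaves of the graft are captured by $\mu_{\pi_0}$ and never by $\mu_\theta$, no leaf of $\pi_1$ maps through $\mu_\theta$ into $\calL$, which is condition 1. It then remains to verify that the two domains are disjoint and cover all leaves of $\pi_1$, so that $\pi_1$ is genuinely bilinked.

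The delicate step, and the reason the statement is phrased with such care, is the \textbf{legality of the applied $\delta$-rule}: the \gs3{} rules $\ex$ and $\neg\fa$ demand a \emph{fresh} witness, whereas the free-variable tableau certified freshness only \emph{a posteriori}, through unifiability of the global store. I would justify the application at $\Gamma, \ex x\, D(x) \vdash$ by noting that the head Skolem symbol of $\delta$ was generated on the fly at this very $\delta$-step, hence occurs neither in the root $\Gamma$ nor in the older formula $\ex x\, D(x)$, and cannot be reintroduced through $\sigma$ because solvability of the store forbids any cyclic dependency of a free variable on a Skolem term built over it --- this is precisely the ``eventually no loop'' guarantee, and it is where the relaxed tableau freshness is finally discharged. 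Because the rule as stated introduces a fresh \emph{constant} while I need the compound term $\sigma\delta$, I would fire it with a genuinely fresh constant $c$ and then substitute $c := \sigma\delta$ throughout the grafted copy; this is sound, since $c$ occurs nowhere else and does not disturb the eigenvariable conditions of the other quantifier rules inside the graft, and it makes the produced side formula exactly $D_\delta$. Securing this freshness argument is the main obstacle; once it is in place, the remaining work --- preservation of $\sigma\Delta_{\mu(s)} \subseteq \Gamma_s$ under the non-destructive rules and the disjoint-cover bookkeeping of the bilink --- is routine, and the $\neg\fa$ case is entirely symmetric.
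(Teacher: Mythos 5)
There is a genuine gap, and it sits exactly where you declare the remaining work ``routine.'' Your construction grafts a fresh copy of $\theta$ at each cleaned leaf $\Gamma, D_\delta \vd$ and propagates $D_\delta$ as a side formula along every branch heading back to $\calL$. But $\theta$ itself contains $\delta$-rules, and after instantiation by $\sigma$ the Skolem term $\delta$ may contain other Skolem terms $\eps$ as \emph{strict subterms} (the arguments of inner Skolemization are free variables, which $\sigma$ can map to terms built over other Skolem symbols). When the grafted copy reaches such a rule on a branch that carries $D_\delta$, the rule is illegal: $\eps$ occurs in the side formula $D_\delta$, so it is not fresh at that node. You cannot weaken $D_\delta$ away there, since those are precisely the branches that must carry $D_\delta$ at their leaves (your condition 2), and your fresh-constant trick does not help: firing with a fresh $c$ and substituting $c := \sigma\delta$ afterwards just reintroduces $\eps$ into the side formula \emph{above} the inner $\delta$-rule, so after the final Skolem-term-to-constant replacement that rule instance violates the eigenvariable condition all the same. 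Your freshness argument via acyclicity of the unifier only discharges the $\delta$-rule \emph{at the graft point} (where the paper needs merely the observation that the root $\Gamma$ is Skolem-free); it says nothing about the rules \emph{inside} the graft, which is where the relaxed tableau freshness actually bites.

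The paper's proof is organized around precisely this obstruction. It proceeds by induction on the pair (size of the Skolem term $\delta$, size of $\pi_{0}$): the base case is your graft (weaken to $\Gamma, \ex{x}\, D$, apply the $\delta$-rule, weaken again), and the inductive step replays the rules of $\pi_{0}$ above the graft, distinguishing for a $\delta$-rule $\brule$ with Skolem term $\eps$ on a branch $\branch$ that is a prefix of $\calL$ three cases: if $\eps$ is incomparable to $\delta$ or contains it, the rule is copied (still fresh); if $\eps = \delta$, the rule is \emph{skipped}, because $D_\delta$ is already present by the invariant -- a case your proposal also needs but never mentions; and if $\eps$ is a strict subterm of $\delta$, the theorem is applied \emph{recursively} to $\eps$, the formula $\ex{y}\, E$, and the leaf set $\calL_\branch = {\mu_{\pi_{0}}^0}^{(-1)}(\branch)$, restructuring the tree by a further graft. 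This nested recursion, terminating by the decrease of the Skolem-term size, is the heart of the proof and the source of the non-elementary blowup the paper discusses; without it your one-pass save--clean--apply--graft construction produces an invalid \gs3{} tree whenever $\delta$ has Skolem subterms. (A minor further slip: linking every $\calL$-leaf of the graft ``back to $s$'' breaks the link condition for $\calL$-paths not passing through the position of $s$; each such leaf must be linked to the unique leaf of $\pi_{0}$ lying on its own path. That is repairable bookkeeping, unlike the freshness issue.)
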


\begin{proof}
We build $\pi_{1}$ by induction on the pair (size of Skolem term
$\delta$, size of $\pi_{0}$).

First of all, if $\pi_{0}$ has no rule, there is a tension between the
imposed formulas at the root of $\pi_{0}$, $\Gamma$, and the leaves of
$\pi_{1}$ linked to a prefix of $\calL$, that contain (at least) $\Gamma,
D_\delta$. That prevents $\pi_{1}$ to be $\pi_{0}$ itself. Indeed, we start
with a manipulated clone of $\theta$ and we \emph{graft} $\Gamma,
D_\delta$ at the leaves $\calL$ of $\theta$, as follows:
\begin{itemize}
\item  We let $\pi_{1}$ be $\theta$ where, to all the
  leaves $\branch \in \calL$ we have weakened to get $\Gamma, \ex{x} D$,
  applied the $\delta$-rule to generate $D_\delta$, and weakened once
  again on $\ex{x} D$. There is no freshness problem, since $\Gamma$
  does not contain any Skolem term or symbol. $\pi_{1}$ has the
  same leaves as $\theta$, except for a new set of leaves, which we
  call $\calL^\dag$. It is composed of the $\branch^\dag =
  \branch.0^{k_\branch}$, where $\branch \in \calL$ and $k_\branch$ is
  the necessary number of $0$ introduced by the $\delta$-rule and the
  weakenings. The formulas of the leaves in $\calL^\dag$ are exactly
  $\Gamma, D_\delta$.
\item We define the bilink in the following way:
  \begin{itemize}
    \item $\mu_\theta$ is the partial link from $\pi_{1}$ to
      $\theta$ defined on all the leaves $\branch$ of
      $\pi_{1}$ that are not member of $\calL^\dag$. It is merely
      the identity:
      $$
      \mu_\theta(\branch) = \branch~\mbox{if } \branch \notin \calL^\dag
      $$
    \item $\mu_{\pi_{0}}$ is the partial link from $\pi_{1}$ to
      $\pi_{0}$ defined on $\calL^\dag$. It is
      the constant $0$ function, since $\pi_{1}$ has no rule:
      $$
      \mu_{\pi_{0}}(\branch^\dag) = 0~\mbox{if } \branch^\dag \in \calL^\dag
      $$
  \end{itemize}
\end{itemize}

Otherwise, $\pi_{1}$ has at least one rule. Then, we
consider any initial part $\pi_0^{0}$ of $\pi_{0}$, that has one rule less
and is still an initial part of $\theta$. Let us call $\pi_0^{1}$ the
proof-tree produced by the induction hypothesis, with mapping
$\mu_{\pi_{0}}^0$ (resp. $\mu_{\theta}^0$) from $\pi_0^{1}$ to $\pi_0^{0}$
(resp. $\theta$).

To go from $\pi_0^{0}$ to $\pi_{0}$, a rule $\brule$ is applied on leaf
$\branch$. We have the following cases:

\begin{itemize}
\item $\branch$ is \emph{not} a prefix of $\calL$.  we simply
  copy the rule on each branch $s_0^1$ of $\pi_0^{1}$ linked to $\branch$,
  i.e. such that $\mu_{\pi_{0}}^0(s_0^1) = \branch$. The bilink is
  formed with an unchanged $\mu_\theta$. $\mu_{\pi_{0}}$ is
  straightforwardly defined from $\mu_{\pi_{0}}^0$ as in the proof of
  \mythm{\ref{thm:parallel-extension}}.

\item $\branch$ is a prefix of $\calL$ and $\brule$ is an
  $\alpha$-,$\beta$-,$\gamma$-rule: we simply copy the rule on
  each branch $s_0^1$ of $\pi^{1}_0$ linked to $\branch$, let $\mu_\theta$
  unchanged and let $\mu_{\pi_{0}}$ be defined from $\mu_{\pi_{0}}^0$ as in the
  proof of \mythm{\ref{thm:parallel-extension}}.

  In the case of a branching $\beta$-rule, we weaken on $D$ on $s_0^1.0$
  (resp. on $s_0^1.1$), if
  $\branch.0$ (resp. $\branch.1$) is no more a prefix of $\calL$. At
  least one of $\branch.0$ and $\branch.1$ is a prefix of $\calL$.

\item $\branch$ is a prefix of $\calL$, $\brule$ is a $\delta$-rule and either
  the Skolem term $\varepsilon$ is not comparable to $\delta$
  for the subterm relation, or it contains $\delta$ as a
  subterm: in this case, we copy the rule as above, since the
  Skolem term is still fresh.

\item $\branch$ is a prefix of $\calL$, $\brule$ is a $\delta$-rule and
  the Skolem term $\varepsilon$ is exactly $\delta$. Since only
  formulas of $T, \sigma$ appear, the Skolem formula must be
  exactly $D_\delta$, otherwise the term would be different. By
  induction hypothesis on $\pi^{1}_0$ and $\mu_{\pi_{0}}^0$, $\branch$ already
  contains $D_\delta$ as a side formula. $\pi^{1}_0$ has already the
  desired form and we let $\pi_{1} = \pi^{1}_0$, $\mu_{\pi_{0}} = \mu_{\pi_{0}}^{0}$ and
  $\mu_\theta = \mu_\theta^0$.

\item $\branch$ is a prefix of $\calL$, $\brule$ is a $\delta$-rule and the
  Skolem term $\eps$ is a strict subterm of $\delta$. Let $E_\eps$
  be the Skolem formula and $\ex{y} E$ the quantified formula. We
  cannot apply the $\delta$-rule on $\ex{y} E$ because $\eps$ is not
  fresh. As well, we cannot recover freshness by weakening on
  $D_\delta$, since this loses the invariant.

  But, since $\eps$ is a strict subterm of $\delta$, we can apply the
  induction hypothesis on $\pi^{1}_0$, on $\eps$ with
  the formula $\ex{y} E$, the set of leaves $\calL_{\branch} =
  {\mu_{\pi_{0}}^0}^{(-1)}(\branch)$ and with $\pi^{1}_0$ as an initial
  part of itself.

  We get a proof-tree, that we call (on purpose) $\pi_{1}$, along
  with a bilink $\mu^{1},\mu^{2}$ to $\pi^{1}_0$ and $\pi^{1}_0$. Let
  $s$ be a branch of $\pi_1$. $\mu^{2}(s) \notin \calL_\branch$, because
  ``no $\mu^{2}(s)$ can be a prefix of $\calL_\branch$'', and as we
  chose $\pi^{1}_0$ as an initial part of itself, being a prefix means
  being equal. Therefore, if $s$ is linked to a prefix of
  $\calL_\branch$, we must have $\mu^{1}(s) \in \calL_\branch$ and $s$
  contains the formulas:
  \begin{itemize}
    \item $E_\eps$ by the very hypothesis of \mythm{\ref{thm:delta-thm-2}}
    \item all the formulas of the corresponding branch of $\calL_\branch$
      by the definition of a partial link, that is to say the
      formulas of the branch $\branch$, plus the formula $D_\delta$ since
      $\branch$ is a prefix of $\calL$.
  \end{itemize}
  Therefore all those branches contain the formulas of the branch
  $\branch.0$ of $\pi_{1}$, plus the side formula $D_\delta$.

  We now proceed to the definition of the bilink of $\pi_{1}$ with $\pi_{0}$
  and $\theta$:
  \begin{itemize}
    \item $\mu_{\pi_{0}}(s) = \branch.0$ if $\mu^{1}(s)$ is defined and belongs to
      $\calL_{\branch}$, otherwise said if $\mu_{\pi_{0}}^0 (\mu^{1}(s)) = \branch$.
    \item $\mu_{\pi_{0}}(s) = \mu_{\pi_{0}}^0 ([\mu^{1} \sqcup \mu^{2}](s))$ if
      $\mu_{\pi_{0}}$ is defined on $[\mu^{1} \sqcup \mu^{2}](s)$ and different
      of $\branch$. The merge $\sqcup$ is well-defined because of the
      bilink $\mu^{1},\mu^{2}$ is disjoint.
    \item $\mu_\theta(s) = \mu_\theta^0 ([\mu^{1} \sqcup \mu^{2}](s))$
      otherwise, which is defined exactly when the two other cases
      fail.
  \end{itemize}
  We indeed compose the partial link functions, except when it
  comes to the branch $\branch$. It is easy to see that it is a bilink
  (\mydef{\ref{def:bilink}}). Moreover, let us check
  the conditions of the theorem:
  \begin{itemize}
    \item no leaf $s$ such that $\mu_\theta(s)$ is defined is a
      prefix of $\calL$ because this property holds for
      $\mu_\theta^0$. The leaves $s$ linked to a prefix of
      $\calL$ are either such that $\mu_{\pi_{0}}(s) = \branch.0$ or such that
      $\mu_{\pi_{0}}(s) = \mu_{\pi_{0}}^0 ([\mu^{1} \sqcup \mu^{2}](s))$.

    \item the leaves linked to a prefix of $\calL$  have $D_\delta$,
      and only $D_\delta$, as a side formula.

      In the case $\mu_{\pi_{0}}(s) =
      \mu_{\pi_{0}}^0 ([\mu^{1} \sqcup \mu^{2}](s))$ , this is true by
      hypothesis on $\mu_{\pi_{0}}^0$ (it adds exactly $D_\delta$ as a side
      formula) and on $\mu^{1}/\mu^{2}$, that preserve the formulas,
      since $\mu^{1}(s)$ does not belongs to/is not a prefix of (which
      is the same here) $\calL_\branch$.

      In the case $\mu_{\pi_{0}}(s) = \branch.0$, this property has
      been checked above.

    \item all other leaves have the same formulas as the branch they
      are linked to. This is an inductive property of the partial
      links $\mu_{\pi_{0}}^0$, $\mu_\theta^0$, $\mu^{1}$ and $\mu^{2}$.
  \end{itemize}

  As a remark, we can see that, if the partial links $\mu^{1}$ and
  $\mu_{\pi_{0}}^0$ are surjective, then the partial link
  $\mu_{\pi_{0}}$ is also surjective.\qed
\end{itemize}
\end{proof}

We conjecture that we can restrict ourselves, in
\mythm{\ref{thm:delta-thm-2}}, to the case of a single rule $\brule$ that
applies on all branches of $\pi_{0}$ that are a prefix of $\calL$. In this
case, we can apply $\brule$ on all the leaves that are mapped to a prefix
of $\calL$ \emph{at once}, that can save us to investigate them one by
one.

Notice that considering a \emph{set} of leaves $\calL$ is essential to
be able to apply induction hypothesis twice. This need comes from the
fact that we duplicate parts of the proof, and formulas and rules are
duplicated: a single tableau rule can be applied several times, in
parallel, in the corresponding sequent proof.

We are now in position to show the remaining case of
\mythm{\ref{thm:parallel-extension}} dealing with $\brule$ when it is a
$\delta$-rule : let $\delta$ be the Skolem term,
and $D_\delta$ the Skolem formula, after instantiation by $\sigma$. We
apply \mythm{\ref{thm:delta-thm-2}} to $\theta = \pi_{0}$, with $\calL =
\mu_{0}^{-1}(\branch)$, and $\pi_{0}$ as an initial part of $\theta$. Due to the
non-destructive nature of \gs3{}, $\Gamma$ appears on each
leaf of $\theta$. We get a proof-tree $\pi_{1}$ bilinked to $\theta/\pi_{0}$,
that is to say linked to $\pi_{0}$ by $\mu_{1} = \mu_\theta \sqcup \mu_{0}$,
where all the branches linked to $\calL$ (equivalently such that
$\mu_{0}(\mu_{1}(s)) = \branch$) contain $D_\delta$ as a side formula. $\mu_{1}$ is a link
because of the covering condition in
\mydef{\ref{def:bilink}}.

Therefore, we have a link $\mu$ from $\pi_{1}$ to
$(T_{1},\sigma)$, defined by $\mu(s) = \mu_{0}(\mu_{1}(s))$ if $\mu_{0}(\mu_{1}(s))
\neq \branch$, and $\mu(s) = \branch.0$. The parallel $\delta$-extension
has succeeded as well.\qed

Lastly, to show \mythm{\ref{prop:tamedlksound}}, we need to follow
strictly \gs3{} rules, that is to say replace the Skolem
terms by fresh constants on the proof-tree obtained by iterating
\mythm{\ref{thm:parallel-extension}}. Since Skolem term are
now \emph{fresh}, this boils down to replacing each Skolem term by a
different constant.\qed


\section{Related work and Conclusion}
\label{sec:conclusion}
\label{sec:related-work}

The effect of using optimized versions of Skolemization has been well studied
for tableaux methods on classical logic.

The increased efficiency resulting from the use of optimized Skolemization in
tableaux methods to handle existential quantifiers has seen a nice body of
work. Baaz and Fermüller \cite{BaazFerm95Tab} show how a more efficient
$\delta^{\star}$-rule, which offers non-elementary speedups in proofs. Even more
efficient $\delta$-rules, in terms of potential speedups, are presented by
Cantone and Nicolosi Asmundo \cite{Cantone2000} with the
$\delta^{\star^{\star}}$ variant
and by Giese and Ahrendt \cite{Giese1999} with the Hilbert's symbol based
$\delta^{\varepsilon}$ rule. All these enhanced Skolemization procedures are
instances of Cantone's and Nicolosi Asmundo's theoretical
framework\cite{DBLP:journals/jar/CantoneA07}. These demonstrated speedups can be
paralleled to the exponential explosion one might
experience when syntactically reconstructing tableaux proofs as ground sequent
derivations.

The technique we use in this paper to show a syntactic soundness proof
for first-order free variable classical tableaux with Skolemization
consists in linking proof-trees to synchronize their simultaneous
expansions. We are hopeful this can be extended to handle other $\delta$-variants.
The need for grafting various sub-trees during the construction of
sequent proof, to take into account the relative freshness of the
Skolem terms, and the consequent growth in width and breadth confirm
that, in presence of free variables and Skolemization, tableaux proofs
are necessarily shorter in a non-elementary way
\cite{BaazFerm95Tab}. This process can indeed make the size of the
sequent proof explodes. Our proof also confirms that semantic
arguments are shorter and often clearer, even though syntactic
transformations are needed in the context of proof verification.

It has to be noticed that (pre-) outer Skolemization or Skolemization
after a prenex normal form transformation would ease \emph{a lot} the
soundness proof. Since tableaux do not bear any $\delta$ rule, we
could translate directly the proof in \gs3{}, and apply Skolem theorem
(if $\fa{x} A(f(x)) \vd$ then $\fa{x} \ex{y} A(y) \vd$). In
particular, the proof-tree does not grow, as there is no speedup in
tableaux.

Our result is not specific to sequent calculus, it also readily
applies to turn free-variable tableaux with Skolemization into
tableaux without free variables, and should generalize gently to other
logics. In particular, our next goal is to lift this work to the
context of deduction modulo \cite{DBLP:journals/jar/DowekHK03}
, to de-Skolemize proofs, and obtain proofs checkable by tools such as
Coq or Dedukti \cite{MBoeQCarOHer12}.

The advantage of a syntactic transformation that avoids to appeal to
the cut rule, as our, is that it paves the way for a \emph{cut
  admissibility theorem}. Indeed, from a sequent calculus proof with
cuts, we would first get universal validity by (sequent) soundness,
then derivability of a tableau proof by completeness, and next, a
cut-free sequent-calculus proof by our method. Cut elimination is
known since the early days of logic for \gs3{}, this is why switching
to other calculi is interesting. In particular, in deduction modulo,
cut elimination depends on the chosen rewrite system.

We could also automate the transformation, by writing a program,
eventually certifying it in Coq, for instance through a certified
programming environment as FoCaLiZe \cite{CDubTHarVDon04}.



\bibliographystyle{splncs03}
\bibliography{biblio}
\end{document}